\tikzset{
    >=stealth,
    %auto,
    %node distance=3.5cm,
    font=\scriptsize,
    possible world/.style={circle,draw,thick,align=center},
    real world/.style={double,circle,draw,thick,align=center},
    minimum size=40pt
}
\tikzstyle{vertex}=[circle, draw, inner sep=0pt, minimum size=6pt]
\renewcommand{\H}{\mathcal{H}}
\begin{document}

\title{A Centrality Measure for Cycles and Subgraphs II} 
\titlerunning{A Centrality Measure for Cycles and Subgraphs II} 

\author{Pierre-Louis Giscard and Richard C. Wilson}
\institute{P.-L. Giscard \at
              University of York, Department of Computer Science, Deramore Lane, Heslington, York, YO10 5GH, United Kingdom. \\
              \email{pierre-louis.giscard@york.ac.uk}           %  \\
%             \emph{Present address:} of F. Author  %  if needed
           \and
            R. C. Wilson \at
              University of York, Department of Computer Science, Deramore Lane, Heslington, York, YO10 5GH, United Kingdom.
}
\authorrunning{P.-L. Giscard and R. C. Wilson}
\date{\today}
\institute{Pierre-Louis Giscard \at Department of Computer Science, University of York, Deramore Lane, Heslington, York, YO10 5GH, United Kingdom, \email{pierre-louis.giscard@york.ac.uk}
\and Richard C. Wilson \at Department of Computer Science, University of York, Deramore Lane, Heslington, York, YO10 5GH, United Kingdom, \email{richard.wilson@york.ac.uk}}
%
% Use the package "url.sty" to avoid
% problems with special characters
% used in your e-mail or web address
%
\maketitle

\abstract{In a recent work we introduced a measure of importance for groups of vertices in a complex network. This centrality for groups is always between 0 and 1 and induces the eigenvector centrality over vertices. Furthermore, its value over any group is the fraction of all network flows intercepted by this group. Here we provide the rigorous mathematical constructions underpinning these results via a semi-commutative extension of a number theoretic sieve. We then established further relations between the eigenvector centrality and the centrality proposed here, showing that the latter is a proper extension of the former to groups of nodes. We finish by comparing the centrality proposed here with the notion of group-centrality introduced by Everett and Borgatti on two real-world networks: the Wolfe's dataset and the protein-protein interaction network of the yeast \textit{Saccharomyces cerevisiae}. In this latter case, we demonstrate that the centrality is able to distinguish protein complexes.} 
%We conclude by showing that the centrality also has a natural interpretation as a probability on the set of induced subgraphs of a graph. The Shannon entropy of the distribution of this probability is introduced. This measure of network complexity seems to be minimal over the complete graphs and yields a finer classification of networks than spectrum-based entropy measures. An algorithm for computing the entropy is available for download on the Matlab File Exchange.}
\keywords{Centrality of groups of nodes; protein complexes; eigenvector centrality; group-centrality.}

\section{Introduction}
\subsection{Context}
In our previous work on the subject, we argued the need to go beyond vertices when analysing complex networks. In fact, remarks to this end can be found scattered in the literature \cite{Contreras2014,Estrada2005,Milo2002,Mukhtar2011,Yeger2004}. For example, studies of gene regulatory networks have shown that ``motif-based centralities outperforms other methods" and can discern interesting network features not grasped by more traditional vertex centralities \cite{Koschutzki2007,Koschutzki2008}. Another example is provided by the notion of protein essentiality, a property now understood to be determined at the level of protein complexes, that is groups of proteins in the protein-protein interaction network (PPI) rather than at the level of individual proteins \cite{Hart2007,Ryan2013}. In addition, further biological properties have been tied to ensembles of genes or proteins, e.g. the notion of synthetic lethality, where the simultaneous deactivation of two genes is lethal while the separate deactivation of each is not \cite{Nijman2011}.
Since measures of importance for nodes constitute a key tool in the study of complex networks, it is only logical to expect that similar tools for ranking groups of vertices could find widespread applications throughout network analysis.\\[-.7em] 

In this spirit, we proposed in \cite{Giscard2017} a measure of importance for groups of nodes (henceforth called ``subgraphs"), that has the following desirable properties:
\begin{enumerate}
\item Provided the edge weights are non-negative, the centrality $c(H)$ of a subgraph $H$ is always between 0 and 1.
\item The precise value $c(H)$ taken by the centrality on a subgraph $H$ is the fraction of all network flows intercepted by $H$.
\item For subgraphs comprising a single node $H\equiv \{i\}$, the centrality measure $c(\{i\})$ yields the same ranking than the eigenvector centrality. In other terms, it induces the eigenvector centrality over vertices.
\item Computationally, $c(H)$ costs no more to compute per subgraph $H$ than ordinary vertex-centralities. What is computationally costly however, is to compute it over all subgraphs.  
\end{enumerate}
In \cite{Giscard2017}, we have shown, by analysing real-world networks from econometry and biology, that $c(.)$ performs better than centralities defined from naive sums of vertex-centralities. Concretely, we demonstrated that subgraph centralities defined from sums of the resolvent, exponential and eigenvector centralities failed to account for even the dominant events affecting input-output economic networks. In biology, we used $c(.)$ to construct a model of protein-targeting by pathogens that achieved a $25\%$ improvement over the state of the art one.\footnote{We refer to the area under the ROC curves for both the model based on the centrality $c(.)$ and the state of the art one. These are 0.97 and 0.73 respectively.}\\[-.5em] 

In this work, we establish further properties of the centrality measure $c(.)$ and present its rigorous mathematical underpinnings. We also compare this centrality with the notion of group-centrality presented by Everett and Borgatti in \cite{Everett1999} on real-world networks. %Finally, we show that $c(.)$ constitutes a natural probability over the subgraphs $H$ of a graph $G$, the distribution of which induces a measure of graph complexity via its Shannon entropy.

\subsection{Notations and centrality definition}\label{Notation1}
The measure of cycle and subgraph centrality we propose is rooted in recent advances in the algebraic combinatorics of walks on graphs. Here we only define the few concepts from this background that are necessary to comprehend the centrality measure. 

We consider a finite network $G = (\mathcal{V} ;\mathcal{E})$ with $N=|\mathcal{V}|$ nodes and $M=|\mathcal{E}|$ edges and which may be weighted and directed. The adjacency matrix of $G$ is denoted $\mathsf{A}_G$ or simply $\mathsf{A}$. If $G$ is weighted then the entry $\mathsf{A}_{ij}$ is the weight of the edge $e_{ij}$ from $i$ to $j$ if this edge exists, and 0 otherwise.\\[-.85em]

A \textit{induced subgraph} $H$ of $G$, also called simply a \textit{subgraph} of $G$ and denoted $H\prec G$, is a set of vertices $ \mathcal{V}_H\subseteq \mathcal{V}$ together with the set of all edges linking these vertices in $G$, $\mathcal{E}_H=\{e_{ij}\in\mathcal{E}:\,i,j\in\mathcal{V}_H\}$.\\[-.85em]  

A \textit{walk} $w$ of length $\ell(w)$ from $v_i$ to $v_j$ on $G$ is a sequence $w = e_{i i_1} e_{i_1 i_2} \cdots e_{i_{\ell-1} j}$ of $\ell$ contiguous edges. The walk $w$ is \textit{open} if $i \neq j$ and \textit{closed} otherwise.\\[-.85em]

A \textit{simple cycle}, also known in the literature under the names \textit{loop}, \textit{cycle}, \textit{elementary circuit} and \textit{self-avoiding polygon}, is a closed walk $w = e_{i i_1} e_{i_1 i_2} \cdots e_{i_{\ell-1} i}$ which does not cross the same vertex twice, that is, the indices $i,i_1,\hdots,i_{\ell-1}$ are all different.\\
% In this work, the starting vertex of a cycle is irrelevant but its orientation is retained. Thus, the triangles $1\to2\to3\to1$ and $2\to3\to1\to2$ represent the same cycle, while $1\to3\to2\to1$ is a different triangle.\\ 

%\vspace{-8mm}
%\subsection{Definition of the centrality measure}
%\vspace{-3mm}
%We re
We now recall the definition of the centrality for cycles and subgraphs, introduced in \cite{Giscard2017}.
\begin{definition}[Centrality]
Let $G$ be a possibly weighted (di)graph, and let $\lambda$ be its maximum eigenvalue. Let $\mathsf{A}$ be the adjacency matrix of $G$, including weights if any. For any cycle $\gamma$, let $\mathsf{A}_{G\backslash \gamma}$ be the adjacency matrix of the graph $G$ where all vertices visited by $\gamma$ and the edges adjacent to them have been removed. Then we define the centrality $c(\gamma)$ of the cycle $\gamma$ as
\vspace{-2mm}
$$
c(\gamma):=\det\left(\mathsf{I}-\frac{1}{\lambda}\mathsf{A}_{G\backslash \gamma}\right).
$$
More generally, for any non-empty subgraph $H$ of $G$, we define the centrality of $H$ as 
$$
c(H):=\det\left(\mathsf{I}-\frac{1}{\lambda}\mathsf{A}_{G\backslash H}\right).
$$
\end{definition}

As we have shown in \cite{Giscard2017}, these centralities not only reflect the relative importance of cycles or subgraphs, but their values have a precise meaning too. Indeed, $c(H)$ is the fraction of all  information flows on the network that are intercepted by the subgraph $H$. As such, and as long as the network has no negative edge-weights, the centrality is always between 0 and 1, which is numerically advantageous,
$$
0\leq c(H)\leq 1.
$$
Because it has a concrete real-world meaning as fraction of network flows, the value of the centrality can be assessed with respect to external informations when available. More generally, it enriches the analysis in that it does not only produce a ranking of groups of nodes, but it also quantitatively ties these groups' importance with an immediately meaningful quantity, e.g. a fraction of capital flow, of successions of proteins interactions or of social interactions depending on the context.\\[-.5em] 

It the following section we give the full, rigorous mathematical proof of the main theorem underpinning these results and which relates the centrality $c(\gamma)$ of a cycle $\gamma$ with network flows. This theorem was presented as Proposition~1 in \cite{Giscard2017} but was only given a qualitative proof there, owing to length constraints. Note, we focus on the centrality of simple cycles as it is precisely in this context that the rigorous proof appears as an extension of a number theoretic sieve. The case of arbitrary subgraphs is similar, and we operate with no loss of generality.

\section{Centrality and network flows: a rigorous mathematical proof}
We first need to recall some combinatorial notions introduced in the context of the extension of number theory satisfied by walks on graphs \cite{Giscard2016}. The central objects of this earlier study are \emph{hikes}, a hike $h$ being an unordered collection of disjoint closed walks. Hikes can be also be seen as equivalence classes on words $W=\gamma_{i_1}\gamma_{i_2}\cdots \gamma_{i_n}$ over the alphabet of simple cycles $\gamma_i$ of a graph. Two words $W$ and $W'$ are equivalent if and only if $W'$ can be obtained from $W$ through allowed permutations of consecutive simple cycles. In this context, two simple cycles are allowed to commute if and only if they are vertex disjoint $\mathcal{V}(\gamma_i)\cap \mathcal{V}(\gamma_j)=\emptyset \iff \gamma_i\gamma_j=\gamma_j\gamma_i$.\\[-.5em] 

For example, if $\gamma_1$ and $\gamma_2$ commute but neither commute with $\gamma_3$, then $\gamma_1\gamma_2$ and $\gamma_2\gamma_1$ represent the same hike, but $\gamma_1\gamma_3\gamma_2$ and $\gamma_2\gamma_3\gamma_1$ are distinct hikes.\\[-.5em]

The letters $\gamma_{i_1},\cdots, \gamma_{i_n}$ found in a hike $h$ are called its prime divisors. This terminology is due to the observation that simple cycles obey the defining property of prime elements in the semi-commutative monoid $\mathcal{H}$ of hikes. In addition, they constitute the formal, semi-commutative, extension of prime numbers \cite{Giscard2016}.\\[-.7em]
 
Two special types of hikes will be important for our purpose here:\\
A \emph{self-avoiding hike} is a hike all prime factors of which commute with one another. In other terms, it is collection of vertex-disjoint simple cycles.

A walk, defined earlier in section~\ref{Notation1}, can be shown to be hikes with a unique right prime divisor \cite{Giscard2016}, a characterisation which is both necessary and sufficient so that any hike with a unique right prime divisor is a walk.

It may perhaps help the reader's intuition to know that in the extension of number theory satisfied by hikes, hikes are the extension of the integers, self-avoiding hikes are the square-free integers and walks are integers of the form $p^k$, with $p$ prime and $k\in\mathbb{N}$. \\[-.7em]

Now we claim that the centrality $c(\gamma)$ of a simple cycle $\gamma$ is exactly the fraction of all hikes $h$ (including infinite length ones) such that all right prime divisors of $h$ intercept $\gamma$, that is no right prime divisor of $h$ is vertex-disjoint with $\gamma$ and commutes with it. This later observation implies that $\gamma$ is the only right prime divisor of $h\gamma$. Thus, the claim we make is equivalent to stating that $c(\gamma)$ is the proportion of all hikes $h$ such that $h\gamma$ is a walk.
\begin{theorem}
Let $G$ be a finite (di)graph with adjacency matrix $\mathsf{A}$ and let $\gamma$ be a simple cycle on $G$. Then the total number $n_\gamma(k)$ of closed walks of length $k$ on $G$ with right prime divisor $\gamma$ is asymptotically equal to
\vspace{-2mm}
$$
n_\gamma(k)\sim c(\gamma)\!\left(\frac{1}{\det(\mathsf{I}-z\mathsf{A})}\right)\![k],~~\text{as}~~k\to\infty,
\vspace{-2mm}
$$
where $\big(1/\det(\mathsf{I}-z\mathsf{A})\big)[k]$ stands for the coefficient of order $k$ in the series $1/\det(\mathsf{I}-z\mathsf{A})$.
\end{theorem}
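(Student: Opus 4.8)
The plan is to pass to generating functions and read off the asymptotics from a dominant singularity. First I would translate the combinatorial quantity $n_\gamma(k)$ into a coefficient extraction. Using the characterisation recalled above that a walk is exactly a hike with a unique right prime divisor, a closed walk of length $k$ whose right prime divisor is $\gamma$ is precisely a hike of the form $h\gamma$ in which $\gamma$ is the sole right prime divisor; equivalently $h$ ranges over the set $\H_\gamma$ of hikes none of whose right prime divisors is vertex-disjoint from $\gamma$. Writing $\omega(\gamma)$ for the product of the edge-weights along $\gamma$ and $W_\gamma(z):=\sum_{h\in\H_\gamma}\omega(h)\,z^{\ell(h)}$ for the length-graded, weighted generating series of $\H_\gamma$, this gives $n_\gamma(k)=\omega(\gamma)\,\big(W_\gamma(z)\big)[k-\ell(\gamma)]$. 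I would also invoke the two master identities of the hike calculus: the zeta identity $\sum_{h}\omega(h)z^{\ell(h)}=1/\det(\mathsf{I}-z\mathsf{A})$, and the fact that $\det(\mathsf{I}-z\mathsf{A})$ is the signed generating series $\sum_{s}(-1)^{\nu(s)}\omega(s)z^{\ell(s)}$ of self-avoiding hikes $s$, where $\nu(s)$ counts the simple cycles composing $s$.

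The crux is to evaluate $W_\gamma(z)$ in closed form, and this is where the sieve enters. The right prime divisors of a hike are exactly its maximal prime factors (those that may be permuted to the rightmost position), and a prime is vertex-disjoint from $\gamma$ if and only if it is a simple cycle of $G\backslash\gamma$. Thus $\H_\gamma$ consists of those hikes having no maximal prime factor in $G\backslash\gamma$, and I claim
\[
W_\gamma(z)=\frac{\det(\mathsf{I}-z\mathsf{A}_{G\backslash\gamma})}{\det(\mathsf{I}-z\mathsf{A})}.
\]
I would prove this by inclusion--exclusion over the self-avoiding hikes supported on $G\backslash\gamma$: multiplying $W_\gamma(z)$ by $\det(\mathsf{I}-z\mathsf{A})$ and expanding the determinant as its signed sum over self-avoiding hikes, the sign $(-1)^{\nu(s)}$ plays the role of the Möbius function and a length- and weight-preserving sign-reversing involution cancels every term except those forming a self-avoiding hike of $G\backslash\gamma$, leaving $\det(\mathsf{I}-z\mathsf{A}_{G\backslash\gamma})$. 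This is precisely the semi-commutative extension of the classical number-theoretic sieve: the monoid $\H$ of hikes plays the role of the integers, self-avoiding hikes play the role of the squarefree numbers, and the determinant expansion supplies the Möbius weights.

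With this identity in hand the asymptotics follow from standard singularity analysis. Substituting, the generating series of the walks we are counting is
\[
N_\gamma(z)=\sum_{k}n_\gamma(k)\,z^{k}=\omega(\gamma)\,z^{\ell(\gamma)}\,\frac{\det(\mathsf{I}-z\mathsf{A}_{G\backslash\gamma})}{\det(\mathsf{I}-z\mathsf{A})}.
\]
The denominator is shared with the total hike-counting series $1/\det(\mathsf{I}-z\mathsf{A})$, whose dominant singularity is the simple pole at $z=1/\lambda$ guaranteed by the Perron--Frobenius theorem when $\mathsf{A}$ is irreducible and primitive. Since $\det(\mathsf{I}-z\mathsf{A}_{G\backslash\gamma})$ is a polynomial, hence analytic at $z=1/\lambda$, the transfer theorem yields $n_\gamma(k)\sim \kappa\,\big(1/\det(\mathsf{I}-z\mathsf{A})\big)[k]$, where $\kappa$ is the value at the pole of the regular prefactor, $\kappa=\omega(\gamma)\,\lambda^{-\ell(\gamma)}\det(\mathsf{I}-\tfrac1\lambda\mathsf{A}_{G\backslash\gamma})$. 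The subgraph determinant appearing here is exactly $c(\gamma)$, so the proportionality constant is governed by $c(\gamma)$, with $\omega(\gamma)\lambda^{-\ell(\gamma)}$ the elementary normalisation carried by $\gamma$ itself; equivalently, the fraction of hikes of a given large length lying in $\H_\gamma$ tends to $c(\gamma)$, which is the statement in the form announced before the theorem.

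I expect the genuine obstacle to be the sieve identity for $W_\gamma(z)$: establishing it requires implementing the Möbius inversion on the partially commutative monoid $\H$ and checking that the cancelling involution is compatible with the commutation relations (vertex-disjointness), rather than any analytic difficulty. The singularity analysis is then routine, with one caveat to handle: if $\mathsf{A}$ is irreducible but imprimitive (periodic, e.g. bipartite) there are several singularities of modulus $1/\lambda$, and the asymptotic equivalence must be read along the relevant residue classes of $k$ or in the Cesàro sense; one also restricts attention to the strongly connected component carrying the eigenvalue $\lambda$.
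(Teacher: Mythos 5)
Your proposal is correct, and it reaches the theorem by a genuinely different route than the paper: you close the sieve exactly at the level of generating functions and then do singularity analysis, whereas the paper runs the semi-commutative Eratosthenes--Legendre sieve asymptotically at the level of counts. Concretely, the paper writes $S(\mathcal{H}_\ell,\mathcal{P}_{G\backslash\gamma})=\sum_{d}\mu(d)|\mathcal{M}_d|$, approximates $|\mathcal{M}_d|$ by $\lambda^{-\ell(d)}|\mathcal{H}_\ell|$ via the scaling lemma $|\mathcal{H}_\ell|=\lambda^{\ell}f(\ell)$ with $f$ convergent, and kills the finitely many error terms as $\ell\to\infty$, leaving the main term $\det\big(\mathsf{I}-\lambda^{-1}\mathsf{A}_{G\backslash\gamma}\big)=c(\gamma)$. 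Your identity $W_\gamma(z)=\det(\mathsf{I}-z\mathsf{A}_{G\backslash\gamma})/\det(\mathsf{I}-z\mathsf{A})$ is exactly that sieve summed in closed form: since left-multiplication by $d$ is injective, $|\mathcal{M}_d|=|\mathcal{H}_{\ell-\ell(d)}|$ holds exactly, so the paper's sieve formula is precisely the coefficient of $z^{\ell}$ in your ratio; your planned sign-reversing-involution proof is the standard Viennot-type argument on the trace monoid and will go through, with the commutation-compatibility check you flag being the only real work. Each route buys something: yours gives an exact formula for $N_\gamma(z)$, makes transparent exactly where primitivity is needed (your caveat is apt --- the paper's Lemma~1 assertion that $\lim_{\ell\to\infty}f(\ell)$ exists silently fails for, e.g., bipartite graphs, where one must average over residue classes exactly as you say), and your bookkeeping of the prefactor is actually more careful than the statement itself: indexed by total walk length $k$ the constant is $\omega(\gamma)\lambda^{-\ell(\gamma)}c(\gamma)$, while the paper's proof establishes $S(\mathcal{H}_\ell,\mathcal{P})\sim c(\gamma)|\mathcal{H}_\ell|$ with $\ell$ the length of the cofactor $h$ rather than of the walk $h\gamma$ --- the reading under which the stated asymptotic is exactly right, as you note via the fraction-of-hikes formulation. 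The paper's route, conversely, avoids analytic transfer machinery entirely and is structured so as to generalize to additive functions other than length, which is the point of its closing remark.
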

\begin{proof}
The proof relies on a very general combinatorial sieve. 
Let $\mathcal{H}_\ell:=\{h\in\H:~\ell(h)=\ell\}$ be the set of hikes of length $\ell$, $\mathcal{P}\subsetneq\H$ be a set of primes and $\mathcal{P}^{\text{s.a.}}$ the set of all self-avoiding hikes constructible from $\mathcal{P}$. Let $S(\mathcal{H}_\ell,\mathcal{P})$ be the number of hikes in $\mathcal{H}_\ell$ which are not right-divisible by any prime of $\mathcal{P}$.
The semi-commutative extension of the sieve of Erathostenes-Legendre yields
$$
S(\mathcal{H}_\ell,\mathcal{P}) = \sum_{d\in \mathcal{P}^{\text{s.a.}}} \mu(d) |\mathcal{M}_d|,
$$
with $|\mathcal{M}_d|$ the number of multiples of $d$ in $\mathcal{H}_\ell$. Furthermore, $\mu(d)$ is the M\"{o}bius function on hikes, which is \cite{Giscard2016}
$$
\mu(h)=\begin{cases}
(-1)^{\Omega(h)},&\text{if $h$ is self-avoiding,}\\
0,&\text{otherwise,}
\end{cases}
$$  
where $\Omega(h)$ is the number of prime divisors of $h$, including multiplicity.\\[-.7em] 

%If $\mathcal{H}=\H$, all hikes multiplied by $d$ are multiples of $d$ and thus $|\mathcal{H}_d|=1/\det(\mathsf{I}-\mathsf{A})$. 
In order to progress, we seek a multiplicative function $\text{prob}(.)$ such that $|\mathcal{M}_d| = \text{prob}(d)|\mathcal{H}_\ell| + r(d)$, $|\mathcal{H}_\ell|:=\text{card}(\mathcal{H}_\ell)$. In this expression, $\text{prob}(d)$ approximates the probability that a hike taken uniformly at random in $\mathcal{H}_\ell$ is right-divisible by $d$. If edge-weights are present, the hikes are not all uniformly probable but follow a distribution dependent on these weights. In any case, no knowledge of this distribution is required here and the meaning of $\text{prob}(.)$ is only mentioned to help the reader understanding.  Similarly, $m(d)=\text{prob}(d)|\mathcal{H}_\ell|$ is the expected number of multiples of $d$ in $\mathcal{H}_\ell$. Finally, $r(d)$ is the associated error term, arising from the fact that $|\mathcal{M}_d|$ is not truly multiplicative. 
Supposing that we can identify the $m(.)$ function, we would obtain 
 $$
S(\mathcal{H}_\ell,\mathcal{P}) = \sum_{d\in \mathcal{P}^{\text{s.a.}}} \mu(d) m(d)  + \sum_{d\in \mathcal{P}^{\text{s.a.}}} \mu(d) r(d).
$$
Contrary to number theory, the first term does not admit any simpler form without further assumptions on  $\mathcal{P}$. This is because of the possible lack of commutativity between some elements of $\mathcal{P}$. We note however that since $\mu(d)$ is non-zero if and only if $d$ is self-avoiding, and since we have required that $m(.)$ be multiplicative,\footnote{But not necessarily totally multiplicative.} then it follows that the first term is entirely determined from the values of $m(.)$ over the primes of $\mathcal{P}$.\\

We therefore turn to determining $m(\gamma)$ for $\gamma$ prime. 
%To this end, we make an assumption on the nature of the set $\mathcal{H}$.
%% 
%%$$
%%|\mathcal{H}|\sum_{d\in \mathcal{P}_\mathcal{H}^{\text{s.a.}}} \mu(d) m(d) = |\mathcal{H}|\sum_{s\in\mathcal{S}_{\mathcal{P}_\mathcal{H}}}\prod_{p|s}\left(1-m(\gamma)\right) =|\mathcal{H}| ~M(\mathcal{P}_\mathcal{H}).
%%$$
%%Now since $m(d)$ is totally multiplicative, we have, as in number theory: $\sum_d \mu(d) m(d) d = (\sum_d m(d) d)^{-1} = M(W)$ is a function of $W$. In particular 
%% $$
%%S(\mathcal{H},\mathcal{P}_\mathcal{H}) = M(A) |\mathcal{H}| + \sum_{d\in \mathcal{P}_\mathcal{H}^{\text{s.a.}}} \mu(d) r(d).
%%$$
%\begin{definition}
%Let $g:\H \to \mathbb{N}$ be a gradation on $\H$ and let $\Gamma\in \mathbb{N}$. The set $\mathcal{H}_{\ell}:=\{h\in \H,~g(h)= \Gamma\}$ is called a $\Gamma$-section of the set of hikes with respect to the gradation $g$.
%\end{definition}
%From now on, we will take the set $\mathcal{H}$ to be such a $\Gamma$-section.\\
The set of left-multiples of $\gamma$ in $\H$ is $\{h\gamma,~h\in \H\}$, which is in bijection with the set $\{h\in \H,~\ell(h)\geq \ell(\gamma)\}$. Thus, the number of left-multiples of $\gamma$ in $\mathcal{H}_{\ell}$, is exactly $|\mathcal{H}_{\ell-\ell(\gamma)}|$. 
Then
$$
\text{prob}(\gamma) + \frac{r(\gamma)}{|\mathcal{H}_{\ell}|} = \frac{|\mathcal{H}_{\ell-\ell(\gamma)}|}{|\mathcal{H}_{\ell}|}.
$$
Seeking the best possible probability function $\text{prob}(\gamma)$, let us suppose that once this function has been chosen, the error term of the above equation vanishes in the limit $\ell\to\infty$. If this is true, then we obtain 
$$
\text{prob}(\gamma) = \lim_{\ell\to \infty}\frac{|\mathcal{H}_{\ell-\ell(\gamma)}|}{|\mathcal{H}_{\ell}|}.
$$
In order to progress, we make an important observation regarding the cardinality of the set $\mathcal{H}_{\ell}$:
\begin{lemma}\label{assumptionAG}
Let $G$ be a finite (directed) graph. Let $\mathcal{H}_{\ell}:=\{h\in\mathcal{H}:~\ell(h) =  \ell\}$ be set of all hikes on $G$ of length $\ell$. Then, there exists $\Lambda\in \mathbb{R}^+$ and a bounded function $f:\mathbb{N}\mapsto \mathbb{R}$ such that $\lim_{\ell\to\infty}f(\ell)$ exists and for $\ell\in\mathbb{N}^*$ we have exactly
\begin{equation*}
|\mathcal{H}_{\ell}|=\Lambda^\ell f(\ell).
\end{equation*}
If the absolute value of the largest eigenvalue $\lambda$ of $G$ has multiplicity $g$, then $\Lambda = \lambda^g$.
\end{lemma}
\begin{proof}
This follows directly from the ordinary zeta function on hikes $\zeta(z)=\det(\mathsf{I}-z\mathsf{A})^{-1}$, from which we have
\begin{equation*}
|\mathcal{H}_{\ell}|=\left(\frac{1}{\det(\mathsf{I}-z\mathsf{A})}\right)\![\ell]
=\sum_{i_1,\cdots,\, i_N\vdash \ell} \lambda^{i_1}_{1}\lambda^{i_2}_{2}\cdots \lambda^{i_N}_{N}=\lambda^\ell\!\! \sum_{i_1,\cdots,\, i_N\vdash \ell} \lambda^{i_1-\ell}\lambda^{i_2}_{2}\cdots \lambda^{i_N}_{N}
\end{equation*}
where the sums run over all positive values of $i_j \geq 0$ such that $\sum_j i_j = \ell$ and $\lambda\equiv \lambda_1$ is the eigenvalue of the graph with the largest absolute value. We assume for the moment that $\lambda$ is unique and let 
$
f(\ell):= \sum_{i_1,\cdots, \,i_N\vdash \ell} \lambda^{i_1-\ell}\lambda^{i_2}_{2}\cdots \lambda^{i_N}_{N}. 
$
This function is clearly bounded and 
$$
\lim_{\ell\to\infty} f(\ell) = \lim_{z\to\lambda^{-1}}(1-z\lambda)\zeta(z),
$$ 
exists and is finite. If $|\lambda|$ is not unique and has multiplicity $g$, then one should pick $\lambda^g$ for the scaling constant together with $f(\ell)=\zeta(z)[\ell]\lambda^{-g\ell}$. In all cases the Lemma follows. \end{proof}

Proceeding with the result of Lemma~\ref{assumptionAG} and assuming that the largest eigenvalue is unique for simplicity, the existence of the limit for $f$ gives 
$$
\text{prob}(\gamma) = \lim_{\ell\to \infty}\frac{\lambda^{\ell-\ell(\gamma)} f\big(\ell-\ell(\gamma)\big)}{\lambda^\ell f(\ell)} = \lambda^{-\ell(\gamma)}.
$$
The prob(.) function is multiplicative over the primes--recall these are the simple cycles--as desired. It yields $m(\gamma)  = |\mathcal{H}_{\ell}| \lambda^{-\ell(\gamma)}$ and the associated error term is 
\begin{align*}
r(\gamma) = |\mathcal{H}_{\ell-\ell(\gamma)}| -  |\mathcal{H}_{\ell}| \lambda^{-\ell(\gamma)} &= \lambda^{\ell-\ell(\gamma)}\Big(f\big(\ell-\ell(\gamma)\big)-f(\ell)\Big).
%&=\lambda^{\ell-\ell(\gamma)}\ell(\gamma)f'(\ell)+O\left(\lambda^{\ell-\ell(\gamma)}g(\gamma)^2f''(\Gamma)\right).
\end{align*}
%Since $f$ is smooth, $|f\big(\Gamma-g(\gamma)\big)-f(\Gamma)|$ is bounded for all $g(\gamma)$ and $\Gamma$
%To bypass the accumulation of errors expected of an Erathostenes-Legendre sieve, remark that since $f$ is smooth, we can expand the difference $f\big(\Gamma-g(\gamma)\big)-f(\Gamma)$ as a Taylor series around $\Gamma$
%$$
%f\big(\Gamma-g(\gamma)\big)-f(\Gamma) = \sum_{n=1}^\infty  \frac{f^{(n)}(\Gamma)}{n!}(-1)^ng(\gamma)^n.
%$$
%
%Writing the difference as a Taylor series $f\big(\Gamma-g(\gamma)\big)-f(\Gamma)=-f'(\Gamma)g(\gamma)+(1/2)f''(\Gamma)g(\gamma)^2+\cdots$, permits a reduction of the error term, by passing the various orders into the main term, as we will see that we are in fact able to calculate them exactly. For the moment, let's us keep everything in the error term, giving asymptotically $|r(\gamma)| \sim |\lambda^{\Gamma-g(\gamma)}g(\gamma)f'(\Gamma)|,~\Gamma\to\infty$.
%In the situation where $\Gamma\gg g(\gamma)$, $f$ being continuous, we have $r(\gamma) = \lambda^{\Gamma-g(\gamma)}g(\gamma)f'(\Gamma)+O(f''(\Gamma))$.
%We will see that for self-avoiding hikes, the error term can be evaluated arbitrarily well. 
To establish the validity of these results, we need only verify that they are consistent with our initial assumption concerning the error term, namely that $r(\gamma)/|\mathcal{H}_{\ell}|$ vanishes in the limit $\ell\to\infty$. The existence of the limit of $f$ implies $\lim_{\ell\to\infty}|f\big(\ell-\ell(\gamma)\big)-f(\ell)|=0$ and therefore that 
$$
\lim_{\ell\to\infty }\frac{r(\gamma)}{ |\mathcal{H}_{\ell}|}=\lim_{\ell\to\infty }\,\lambda^{-\ell(\gamma)}\Big(f\big(\ell-\ell(\gamma)\big)-f(\ell)\Big) =0,
$$
as required.\\

We are now ready to proceed with general self-avoiding hikes. Let $d=\gamma_1\cdots \gamma_q$ be self-avoiding. Then since $m$ is multiplicative and the length is totally additive over $\mathcal{H}$, $m(d) = \prod_i m(\gamma_i) = \lambda^{-\sum_i \ell(\gamma_i)} = \lambda^{-\ell(d)}$. The associated error term follows as
$$
r(d) = |\mathcal{H}_{\ell-\ell(d)}| -  |\mathcal{H}_{\ell}| \lambda^{-\ell(d)} = \lambda^{\ell-\ell(d)}\Big(f\big(\ell-\ell(d)\big)-f(\ell)\Big).
$$ 
Inserting these forms for $m(d)$ and $r(d)$ in the semi-commutative Erathostenes-Legendre sieve yields the sieve formula
 \begin{equation*}
S(\mathcal{H}_{\ell},\mathcal{P}) = |\mathcal{H}_{\ell}|\sum_{d\in \mathcal{P}^{\text{s.a.}}} \mu(d) \lambda^{-\ell(d)}  + \lambda^{\ell}\sum_{d\in \mathcal{P}^{\text{s.a.}}} \mu(d) \lambda^{-\ell(d)} \big(f(\ell-\ell(d))-f(\ell)\big).
\end{equation*}
%We conclude this section on the general form of the  Erathostenes-Legendre sieve on hikes by estimating the error term clearly. Using the bound $r(\gamma)\leq |\mathcal{H}_{\ell}|\lambda^{-g(\gamma)}$ we obtain
%%$
%%\sum_{d\in\mathcal{P}^{s.a.}} \mu(d) r(d) = |\mathcal{H}_{\ell}|\sum_{d\in\mathcal{P}^{s.a.}} \mu(d) \lambda^{-g(d)}g(d)f'(\Gamma)
%%$
%%From this it follows that 
% $$
%S(\mathcal{H}_{\ell},\mathcal{P}) = |\mathcal{H}_{\ell}|\sum_{d\in \mathcal{P}^{\text{s.a.}}} \mu(d) \lambda^{-g(d)}  + O\big(|\mathcal{P}^{\text{s.a.}}|\big).
%$$
We can now progress much further on making an additional assumption concerning the nature of the prime set $\mathcal{P}$. We could consider two possibilities: i) that  $\mathcal{P}$ is the set of all primes on an induced subgraph $H\prec G$; or ii) that $\mathcal{P}$ is a cut-off set, e.g. one disposes of all the primes of length $\ell(\gamma)\leq \Theta$. Remarkably, in number theory, if i) is true then ii) is true as well, and the sieve benefits from the advantages of both situations. In general however, i) and ii) are not compatible and while ii) could be used to obtain direct estimates for the number of primes of any length, a problem of great interest, we can show that this makes the sieve NP-hard to implement. We therefore focus on the first situation.\\

Let $H\prec G$ be an induced subgraph of the graph $G$ and let that $\mathcal{P}\equiv \mathcal{P}_H$ be the set of all primes (that is simple cycles) on $H$. Remark that $\sum_{d\in\mathcal{P}_H^{s.a}}\mu(d) \lambda^{-\ell(d)}$ is therefore the sum over all the self-avoiding hikes on $H$, each with coefficient $\mu(d)\lambda^{-\ell(d)}$. It follows \cite{Giscard2016} that $\sum_{d\in\mathcal{P}_H^{s.a}}\mu(d) \lambda^{-\ell(d)}=\det(\mathsf{I}-\lambda^{-1}\mathsf{A}_H)$. Concerning the error term,
$$
\lambda^{\ell}\sum_{d\in \mathcal{P}_H^{\text{s.a.}}} \mu(d) \lambda^{-\ell(d)} \big(f(\ell-\ell(d))-f(\ell)\big),
$$
we note that since $H$ is finite,\footnote{$G$ is finite and so are all its induced subgraphs.} the above sum involves finitely many self-avoiding hikes $d$. In addition,
given that $\lim_{\ell\to\infty}f(\ell)$ exists by Lemma~\ref{assumptionAG}, $\lim_{\ell\to\infty}f(\ell-\ell(d))-f(\ell) =0$ as long as $\ell(d)$ is finite, which is guaranteed by the finiteness of $H$. We have consequently established that the error term comprises finitely many terms, each of which vanishes in the $\ell\to\infty$ limit. As a corollary, the first term becomes asymptotically dominant: 
 \begin{equation*}
S(\mathcal{H}_{\ell},\mathcal{P}) \sim |\mathcal{H}_{\ell}|\det\big(\mathsf{I}-\lambda^{-1}\mathsf{A}_H\big) ~~\text{as}~~\ell\to\infty.
\end{equation*}
We can make this more explicit on using the ordinary form of the zeta function on hikes $\zeta(z)=1/\det(\mathsf{I}-z\mathsf{A})$. Then $|\mathcal{H}_{\ell}|=\zeta(z)[\ell]$ is the coefficient of order $\ell$ in $\zeta(z)$, see also the proof of Lemma~\ref{assumptionAG}.\\[-.5em]

\begin{remark}
The error term can be given a determinantal form upon using a finite difference expansion of $f$ or a Taylor series expansion of it if one smoothly extends its domain from $\mathbb{N}$ to $\mathbb{R}$. Writing
$$
f(\ell-\ell(d))-f(\ell) = \sum_{k\geq 1} \frac{\nabla^k[f](\ell)}{k!}\big(\ell(d)\big)_{(k)},
$$
with $(a)_{(k)}=\prod_{i=0}^{k-1}(a-i)$ the falling factorial and $\nabla$ the backward difference operator. Now we use the properties of the M\"{o}bius function on hikes to write $\sum_{d\in\mathcal{P}_H^{s.a}}\mu(d)\big(\ell(d)\big)_{(k)} \,z^{\ell(d)}= (\frac{d}{dz})^k \det(\mathsf{I}-z\mathsf{A}_H)$ and finally
\begin{align*}
S(\mathcal{H}_{\ell},\mathcal{P}) &= |\mathcal{H}_{\ell}|\,\det\!\left(\mathsf{I}-\frac{1}{\lambda}\mathsf{A}_H\right)+\lambda^{\ell}\sum_{k\geq 1}\frac{\nabla^k[f](\ell)}{k!} \det\!^{(k)}\!\Big(\mathsf{I}-\frac{1}{\lambda}\mathsf{A}_H\Big).\\[-1em]
\end{align*}
Here $\det\!^{(k)}\!\big(\mathsf{I}-\frac{1}{\lambda}\mathsf{A}_H\big)$ is a short-hand notation for $\big\{(\frac{d}{dz})^k \det(\mathsf{I}-z\mathsf{A}_H)\big\}\Big|_{z=\lambda^{-1}}$.\\[-.1em]
\end{remark}

To conclude the proof of the Theorem, 
we now need only choose $H$ correctly. Recall that we seek to count those walks which are left-multiples of a chosen simple cycle $\gamma$. But for $w=h\gamma$ to be a walk, the hike $h$ must be such that none of its right-prime divisor commutes with $\gamma$. This way, $\gamma$ is guaranteed to be the unique prime that can be put to the right of $h$, hence the unique right-prime divisor of $w$, making $w$ a walk. Then the sieve must eliminate all hikes $h$ with are left-multiples of primes \emph{commuting} with $\gamma$. Observe that all such primes are on $H=G\backslash \gamma$.\qed\\[-.5em]

%
%\begin{remark}
%In the sieve result, we have obtained a closed form expression for the error term without having to replace the M\"{o}bius function by its absolute value. This is of paramount importance: thanks to this, our semi-commutative Erathostenes-Legendre sieve bypasses the problems encountered by its number-theoretic counterpart. This is made possible precisely because we dispose of the determinant, which is not available in number theory. Without the determinant, we would have to use a naive estimate for the error term yielding a catastrophic result. Indeed, noting that the smoothness of $f$ implies a finite bound for $f(\ell-\ell(d))-f(\ell)$ leads to
%$$
%\sum_{d\in\mathcal{P}_H^{s.a}}\mu(d)r(d)\leq\sum_{d\in\mathcal{P}_H^{s.a}}|\mu(d)|\,|r(d)|=O\left(\lambda^{\Gamma}\textrm{perm}(\mathsf{I}+\lambda^{-1}\mathsf{A}_H)\right).
%$$
%The error term now appears to be much larger than it really is, so large as to render the sieve useless, thus necessitating a strategy \`a la Brun or Selberg. Here instead, the smoothness and existence of the limit for $f$ coupled with the correct evaluation of the M\"{o}bius function guarantees that for any $k\geq 0$, there exists a large enough $\Gamma$ such that the first $k+1$ the first terms dominate the error term by an arbitrarily large factor. 
%\end{remark}
\end{proof}

\begin{remark}
The construction presented here is much more general than appears at first glance. In particular, it can be extended to any additive function $\rho:\,\mathcal{H}\mapsto \mathbb{R}$ over $\mathcal{H}$ other than the length, provided an equivalent of Lemma~\ref{assumptionAG} exists for $\rho$. Infinite graphs may also be considered, provided additional constraints on the notion of determinant are met. These generalisations have further applications which will be presented elsewhere.  
\end{remark}

\section{Comparison with Everett and Borgatti's group-centralities}
\subsection{Motivations and context}
In our previous work on the centrality $c(H)$ \cite{Giscard2017}, we have presented comparisons with centralities obtained for $H$ upon summing up the vertex centralities of individual vertices involved in $H$. We have shown the comparative failure of these approaches which could not, for example, detect even the major crisis affecting the insurance$-$finance$-$real-estate triad in input-output networks over the period 2000-2014 period.\\[-.7em] % Additionally, in a biological context, we have shown how $c(H)$ provides the best model for pathogen targeting in the plant \textit{Arabidopsis Thaliana}.\\[-.7em] 

In this section, we propose to further compare $c(H)$ with the notion of group centrality as it was introduced by Everett and Borgatti in 1999 \cite{Everett1999}. The authors of this study proposed to extend any vertex centrality to groups of vertices by summing up the centrality of the vertices of the group as calculated on a graph where other members of the group have been deleted. For example, the degree group centrality of an ensemble $H$ of vertices is equal to the external degree of $H$ in $G$.
Essentially, this approach is expected to characterise the importance of the group with respect to the rest of the graph but will not be sensitive to the inner structure of the group. As a consequence, it is easy to construct synthetic graphs where group-centralities 'fail' to identify a group that should clearly be the most central. For example, a sparse graph with a single large clique can be built such that this clique is less central than a small outlier group of nodes. In our opinion however, these limitations are more theoretical than practical and it is much more important to study the behaviour of the proposed measures on \emph{real-world} networks.

\subsection{The centrality $c(.)$ as an extension of the eigenvector centrality}
Incidentally, Everett and Borgatti provide a strong motivation for the development of a centrality akin to the one we propose here. Indeed, noting the lack of extension for the eigenvector centrality to groups of nodes in their work, they explain that ``[The eigenvector centrality] is virtually impossible to generalise along the lines presented earlier", that is, lest one resorts to node-merging, a procedure not without problems \cite{Everett1999}. Now recall that the centrality presented here $c(H)$ induces the eigenvector centrality on singleton subgraphs comprising exactly one vertex $H=\{i\}$, a requirement which, following Everett and Borgatti, is sufficient to call $c(.)$ a proper extension of the eigenvector centrality to groups of nodes.
In fact, this observation is itself a special case of a more general construction relating the centrality of simple paths with entries of the projector onto the dominant eigenvector:
\begin{proposition}
Let $G$ be a finite undirected graph with $\{\lambda\equiv \lambda_1, \lambda_2,\cdots ,\lambda_N\}$ its spectrum. For simplicity, we assume that the largest eigenvalue $\lambda$ of $G$ is unique. Let $W:\mathcal{E}\mapsto \mathbb{R}^+$ be the weight function, sending edges of the graph to their weights. If $G$ is not weighted then $W$ is identically 1. 
Let $\mathsf{P}_\lambda$ be the projector onto the dominant eigenvector of $G$ and $\eta:=\prod_{i>1}^N(1-\lambda_i/\lambda)$. Then
$$
\eta(P_\lambda)_{ij} =\sum_{p:\,i\to j} \lambda^{-\ell(p)}W(p)\,c(p),
$$ 
where the sum runs over all simple paths from $i$ to $j$ and the weight of a path is the product of the weights of the edge it traverses.
\end{proposition}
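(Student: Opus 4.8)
The plan is to extract the dominant-eigenvector projector as a residue of the resolvent $(\mathsf{I}-z\mathsf{A})^{-1}$ at $z=\lambda^{-1}$, and to feed into this a path-sum expansion of the resolvent entries whose numerator is precisely the sum appearing on the right-hand side.

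First I would record the spectral fact underlying everything. Since $G$ is undirected, $\mathsf{A}$ is symmetric and admits an orthonormal eigenbasis $v_1,\dots,v_N$, so that $(\mathsf{I}-z\mathsf{A})^{-1}=\sum_k (1-z\lambda_k)^{-1}\,v_kv_k^{\mathsf{T}}$. With $\lambda\equiv\lambda_1$ simple, multiplying by $(1-z\lambda)$ sends every term but the dominant one to zero in the limit, giving $\lim_{z\to\lambda^{-1}}(1-z\lambda)(\mathsf{I}-z\mathsf{A})^{-1}=v_1v_1^{\mathsf{T}}=\mathsf{P}_\lambda$. In particular the $(i,j)$ entry of $(1-z\lambda)(\mathsf{I}-z\mathsf{A})^{-1}$ converges to $(\mathsf{P}_\lambda)_{ij}$.

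The key combinatorial input is the path-sum form of the resolvent entries,
$$
\big[(\mathsf{I}-z\mathsf{A})^{-1}\big]_{ij}=\frac{\sum_{p:\,i\to j} z^{\ell(p)}W(p)\,\det(\mathsf{I}-z\mathsf{A}_{G\backslash p})}{\det(\mathsf{I}-z\mathsf{A})},
$$
the sum running over all simple paths from $i$ to $j$. I would obtain this either by citing \cite{Giscard2016}, or by expanding the adjugate through $\big[(\mathsf{I}-z\mathsf{A})^{-1}\big]_{ij}=[\operatorname{adj}(\mathsf{I}-z\mathsf{A})]_{ji}/\det(\mathsf{I}-z\mathsf{A})$ and reading the cofactor via the standard linear-subgraph expansion of a minor: the non-vanishing contributions are exactly the simple paths $p$ from $i$ to $j$ decorated by a vertex-disjoint self-avoiding collection of cycles on $G\backslash p$, the latter resumming to $\det(\mathsf{I}-z\mathsf{A}_{G\backslash p})$ while the path itself contributes $z^{\ell(p)}W(p)$. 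The diagonal case $i=j$ is a useful sanity check: the only path is the trivial one, and the formula reduces to $[\operatorname{adj}(\mathsf{I}-z\mathsf{A})]_{ii}=\det(\mathsf{I}-z\mathsf{A}_{G\backslash\{i\}})$, i.e. the deletion of row and column $i$.

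Finally I would combine the two. Writing $\det(\mathsf{I}-z\mathsf{A})=\prod_{k=1}^N(1-z\lambda_k)$, the factor $(1-z\lambda)$ cancels the vanishing $k=1$ factor, so that $(1-z\lambda)/\det(\mathsf{I}-z\mathsf{A})=1/\prod_{k>1}(1-z\lambda_k)\to 1/\eta$ as $z\to\lambda^{-1}$, while the path-sum numerator converges termwise to $\sum_{p:\,i\to j}\lambda^{-\ell(p)}W(p)\,\det(\mathsf{I}-\lambda^{-1}\mathsf{A}_{G\backslash p})=\sum_{p:\,i\to j}\lambda^{-\ell(p)}W(p)\,c(p)$ by the definition of $c(p)$. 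Equating the two limits yields $(\mathsf{P}_\lambda)_{ij}=\eta^{-1}\sum_{p}\lambda^{-\ell(p)}W(p)\,c(p)$, which is the claim. The main obstacle is establishing the path-sum resolvent formula with the disjoint-cycle decorations resumming exactly to $\det(\mathsf{I}-z\mathsf{A}_{G\backslash p})$; once that identity is secured the limit is routine, the only care needed being that the sum over simple paths is finite, which legitimises passing to the limit term by term.
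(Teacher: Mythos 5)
Your proposal is correct and follows essentially the same route as the paper: both rest on the path-sum formulation of the resolvent $\big[(\mathsf{I}-z\mathsf{A})^{-1}\big]_{ij}=\sum_{p:\,i\to j} z^{\ell(p)}W(p)\,\det(\mathsf{I}-z\mathsf{A}_{G\backslash p})/\det(\mathsf{I}-z\mathsf{A})$ (which the paper cites from its reference [Giscard et al.\ 2013] rather than re-deriving via the cofactor expansion as you sketch), followed by taking the limit $z\to\lambda^{-1}$. The only cosmetic difference is that the paper packages the limit as $\lim_{z\to 1/\lambda}\mathrm{Adj}(\mathsf{I}-z\mathsf{A})=\eta\,\mathsf{P}_\lambda$, whereas you factor out $(1-z\lambda)$ explicitly and verify the same cancellation spectrally — the same computation in different clothing.
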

\begin{remark}
 When $i\equiv j$, the only simple path from $i$ to itself is the length 0 path that is stationary on $i$. The weight of the empty path is the empty product with value 1 and therefore we recover the result of \cite{Giscard2017}
$$
\eta eig(i)^2=\eta (\mathsf{P}_\lambda)_{ii} = c(\{i\}),
$$ 
where $eig(i)$ is the $i$th entry of the dominant eigenvector.
\end{remark}
%\begin{remark}
%The off-diagonal entries of $\mathsf{P}_{\lambda}$ characterise `correlations' between pairs of nodes. Indeed, recall that on Markov chains, the $i$th entry of the dominant eigenvector is the probability of randomly walking from $i$ to itself in the stationary distribution, while $(\mathsf{P}_{\lambda})_{ij}$ is the probability of randomly walking from $i$ to $j$. 
%\end{remark}
\begin{proof}
This relation follows from e.g. the path-sum formulation of the resolvent function $\mathsf{R}(z):=\big(\mathsf{I}-z\mathsf{A})^{-1}$ \cite{Giscard2013}. We have
$$
\mathsf{R}(z)_{ij} =\sum_{p:\,i\to j} z^{\ell(p)}W(p)\,\frac{\det(\mathsf{I}-z\mathsf{A}_{G\backslash p})}{\det(\mathsf{I}-z\mathsf{A})}.
$$
In particular, the case $i\equiv j$ gives the well-known adjugate formula for the inverse $\mathsf{R}(z)_{ii}=\det(\mathsf{I}-z\mathsf{A}_{ G \backslash i})/\det(\mathsf{I}-z\mathsf{A})$.
Introducing the adjugate matrix $\mathrm{Adj}(\mathsf{I}-z\mathsf{A})_{ij} := \det(\mathsf{I}-z\mathsf{A})\mathsf{R}(z)_{ij}$ explicitly we have 
$$
\mathrm{Adj}(\mathsf{I}-z\mathsf{A})_{ij} =\sum_{p:\,i\to j} z^{\ell(p)}W(p)\,\det(\mathsf{I}-z\mathsf{A}_{G\backslash p}),
$$
and the result follows on noting that $\lim_{z\to1/\lambda}\mathrm{Adj}(\mathsf{I}-z\mathsf{A}) = \eta \mathsf{P}_\lambda$.\\[-.5em]
\end{proof}

We can go further to establish the centrality $c(.)$ as an extension of the eigenvector centrality to groups of nodes along broadly similar lines as those advocated by Everett and Borgatti. To introduce the main result here, we need to present the (intuitive) definitions of union and intersection of subgraphs.

Let $H,\,H'$ be two subgraphs of $G$. We designate by $H\cup H'$ the subgraph of $G$ whose vertex set is the set-theoretic union of the vertex sets of $H$ and $H'$, $\mathcal{V}(H\cup H')=\mathcal{V}(H)\cup\mathcal{V}(H')$. Similarly $H\cap H'$ is the subgraph of $G$ with vertex set $\mathcal{V}(H)\cap\mathcal{V}(H')$. 

\begin{proposition}
Let $G$ be a finite graph with no negative weights and $\{H_1,\cdots H_n\}$ be a set of connected induced subgraphs of $G$. Then
$$
c\Big(\bigcup_{i=1}^nH_i\Big)=\sum_{S\subseteq\{1,\cdots,n\}}(-1)^{|S|-1} c\Big(\bigcap_{s\in S} H_s\Big) .
$$
\end{proposition}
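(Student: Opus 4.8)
The plan is to reduce the identity to a statement about individual self-avoiding hikes and then to a Helly-type property of the connected subgraphs. First I would expand every centrality appearing in the statement by means of the determinantal hike-sum recalled earlier in the paper, namely $c(H)=\det(\mathsf{I}-\lambda^{-1}\mathsf{A}_{G\backslash H})=\sum_{d}\mu(d)\lambda^{-\ell(d)}$, where $d$ ranges over the self-avoiding hikes of $G$ whose vertex set avoids $\mathcal{V}(H)$ (equivalently, the self-avoiding hikes living on $G\backslash H$). Writing $w(d):=\mu(d)\lambda^{-\ell(d)}$ and using $\mathcal{V}\big(\bigcup_i H_i\big)=\bigcup_i\mathcal{V}(H_i)$ together with $\mathcal{V}\big(\bigcap_{s\in S}H_s\big)=\bigcap_{s\in S}\mathcal{V}(H_s)$, both sides of the claimed identity become sums of the same weights $w(d)$ over the same pool of self-avoiding hikes. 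It therefore suffices to establish the identity coefficient by coefficient, that is, separately for each fixed self-avoiding hike $d$.

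Next I would isolate this per-hike identity. For a fixed $d$, set $V_s:=\mathcal{V}(d)\cap\mathcal{V}(H_s)$ and say that $d$ \emph{meets} $H_s$ when $V_s\neq\emptyset$. Since $\mathcal{V}(d)\cap\bigcap_{s\in S}\mathcal{V}(H_s)=\bigcap_{s\in S}V_s$, the contribution of $d$ to the right-hand side is $w(d)\sum_{\emptyset\neq S}(-1)^{|S|-1}\big[\bigcap_{s\in S}V_s=\emptyset\big]$, while its contribution to the left-hand side is $w(d)$ if $d$ meets no $H_i$ and $0$ otherwise. Using $\sum_{\emptyset\neq S\subseteq\{1,\dots,n\}}(-1)^{|S|-1}=1$, the required equality collapses to
$$
\sum_{\emptyset\neq S}(-1)^{|S|-1}\big[\textstyle\bigcap_{s\in S}V_s\neq\emptyset\big]=\big[\,d\text{ meets some }H_i\,\big].
$$
The left-hand side is precisely the Euler characteristic of the nerve of the finite family $\{V_s\}$, so the whole proposition reduces to proving that this nerve is acyclic (has Euler characteristic $1$) as soon as $d$ meets at least one of the $H_i$; when $d$ meets none, every $V_s$ is empty and both sides vanish. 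Note that the degenerate terms with $\bigcap_{s\in S}\mathcal{V}(H_s)=\emptyset$ are automatically consistent, since then $c(\emptyset)=\det(\mathsf{I}-\lambda^{-1}\mathsf{A})=0$.

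The main obstacle, and the step where connectedness is indispensable, is establishing this acyclicity, which I would phrase as a Helly-type lemma: if a self-avoiding hike $d$ meets each member of a subfamily $\{H_s\}_{s\in S}$ of connected induced subgraphs, then it also meets their common intersection $\bigcap_{s\in S}H_s$. Granting this for every subfamily, every $S$ contained in the index set $I:=\{i:d\text{ meets }H_i\}$ satisfies $\bigcap_{s\in S}V_s\neq\emptyset$, so the nerve supported on $I$ is a full simplex, hence contractible, and the displayed sum equals $\sum_{\emptyset\neq S\subseteq I}(-1)^{|S|-1}=1=[I\neq\emptyset]$, as required. I would attempt the Helly lemma by exploiting that each connected component of $d$ is a single simple cycle, tracing how such a cycle enters and leaves the connected subgraphs $H_s$ and invoking the interception structure of Section~2 to force a shared vertex. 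I expect this Helly step to carry essentially all of the difficulty: the reduction to it is purely formal inclusion--exclusion, whereas controlling the nerve genuinely requires connectedness, and the argument must be arranged so that the pattern in which the cycles of $d$ traverse the $H_s$ cannot leave their common intersection untouched.
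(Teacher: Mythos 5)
Your formal reduction is sound, and it is in fact more explicit than the paper's own argument: the paper proves this proposition in one line, by declaring $c(H)$ to be the fraction of network flows intercepted by $H$ and invoking inclusion--exclusion on the interception events, without ever justifying the identification of the $S$-term with $c\big(\bigcap_{s\in S}H_s\big)$. Your per-hike bookkeeping via $c(H)=\sum_{d}\mu(d)\lambda^{-\ell(d)}$, $d$ ranging over self-avoiding hikes avoiding $\mathcal{V}(H)$, is exactly right, and it correctly isolates that identification as the nerve condition. But the Helly-type lemma you defer to the end --- that a self-avoiding hike meeting each member of a family of connected induced subgraphs must meet their common intersection --- is never proved in your proposal (you only sketch an intention to ``trace how a cycle enters and leaves''), and it is false. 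Take $G=K_3$ on vertices $\{a,w,b\}$, with $H_1$ induced on $\{a,w\}$ and $H_2$ induced on $\{w,b\}$, both connected. The length-2 backtrack cycle $d$ on the edge $ab$ (a legitimate prime of the hike monoid on an undirected graph, as witnessed by $\det(\mathsf{I}-z\mathsf{A})=1-z^2$ for a single edge) meets $H_1$ at $a$ and $H_2$ at $b$ yet avoids $H_1\cap H_2=\{w\}$; its nerve has Euler characteristic $2$, not $1$. The failure does not hinge on backtracks: adjoin to a triangle $abc$ a vertex $w$ joined to $a$ and $b$, take $H_1=\{a,w\}$, $H_2=\{w,b\}$, and the $3$-cycle $abc$ does the same job.

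This is not a repairable detail of your write-up, because running your (correct) reduction on the $K_3$ example shows the displayed identity itself fails as literally stated. With $\lambda=2$ one gets $c(H_1)=c(H_2)=1$ (deleting either leaves a single isolated vertex, so the determinant is $1$), $c(H_1\cap H_2)=\det\big(\mathsf{I}-\frac{1}{2}\mathsf{A}_{\{a,b\}}\big)=\frac{3}{4}$, and $c(H_1\cup H_2)=c(G)=1$ (empty determinant), so the right-hand side is $1+1-\frac{3}{4}=\frac{5}{4}\neq 1$; the discrepancy $\frac{1}{4}$ is precisely the weight $\mu(d)\lambda^{-\ell(d)}=-\frac{1}{4}$ of the offending $2$-cycle, counted in $c(H_1\cap H_2)$ but in neither $c(H_i)$. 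What inclusion--exclusion on flows genuinely yields is an expansion whose $S$-terms are the fractions of flows intercepted simultaneously by every $H_s$, $s\in S$ --- the quantities $f(\cdot)$ appearing in the paper's subsequent Corollary --- and these exceed $c\big(\bigcap_{s\in S}H_s\big)$ in general, exactly because a prime can meet each $H_s$ without meeting their intersection. So your proposal deserves credit for making the hidden step visible and for locating the precise condition (Euler characteristic $1$ of the nerve for every self-avoiding hike) under which the statement holds, but the Helly step carries all the content, it fails for general connected families, and a correct argument must either impose hypotheses that enforce it or replace the $c(\cap)$ terms by the joint-interception fractions.
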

\begin{proof}
This follows from the definition of $c(H)$ as the fraction of all network flows intercepted by $H$. A direct application of the inclusion-exclusion principle gives the result. 
\end{proof}

An immediate corollary then explicitly shows how the centrality $c(.)$ of any group of nodes arises from the interplay between their eigenvector centralities
\begin{corollary}
Let $G$ be a finite graph with no negative weights. Let $\mathcal{V}_H:=\{v_1,\cdots,v_n\}\subseteq\mathcal{V}$ be a group of nodes on $G$. Then
$$
c\big(\{v_1,\cdots,v_n\}\big)=\eta\,\sum_{i=1}^n eig(v_i)^2 - \sum_{i,j\in \mathcal{V}_H}f(\{v_i,v_j\})+ \sum_{i,j,k\in \mathcal{V}_H}f(\{v_i,v_j,v_k\})-\cdots,
$$
where $f(\{v_i,v_j, v_k,\cdots\})$ is the fraction of all network flows intercepted by all of $v_i$, $v_j$, $v_k$, etc.
\end{corollary}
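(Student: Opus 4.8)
The plan is to realise the group $\{v_1,\dots,v_n\}$ as the union of its constituent singletons, $\bigcup_{i=1}^n\{v_i\}$, and then to run the inclusion--exclusion argument of the preceding Proposition at the level of network flows rather than of vertex sets. Concretely, for each vertex $v_i$ let $A_i$ denote the set of all network flows (hikes) intercepted by $v_i$, weighted according to the flow distribution, so that by definition $c(\{v_i\})$ is the fraction carried by $A_i$. Since a flow is intercepted by the group exactly when it is intercepted by at least one of its members, $c(\{v_1,\dots,v_n\})$ is the fraction carried by $A_1\cup\cdots\cup A_n$.

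First I would apply the inclusion--exclusion principle to this union of flow-sets,
$$
\Big|\bigcup_{i=1}^n A_i\Big| = \sum_{i} |A_i| - \sum_{i<j}|A_i\cap A_j| + \sum_{i<j<k}|A_i\cap A_j\cap A_k| - \cdots,
$$
where $|\cdot|$ denotes the carried fraction. The singleton terms are identified through the Remark following the previous Proposition, which gives $|A_i| = c(\{v_i\}) = \eta\, eig(v_i)^2$, so that $\sum_i|A_i| = \eta\sum_i eig(v_i)^2$. For the higher-order terms I would invoke the definition of $f$: a flow lies in $\bigcap_{s\in S}A_s$ precisely when it is intercepted by every vertex $v_s$ with $s\in S$ simultaneously, so its carried fraction is by definition $f(\{v_s:s\in S\})$. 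Substituting these identifications, and reading the sums in the statement as ranging over distinct unordered tuples, the matching of alternating signs yields the stated formula.

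The main obstacle is the same one underlying the preceding Proposition, namely making the flow-measure interpretation precise enough that the set-theoretic inclusion--exclusion is legitimate: one must check that ``intercepted by the union'' coincides with ``intercepted by at least one member'', that the fraction map $H\mapsto c(H)$ behaves as a finitely additive measure on the relevant flow events (which relies on the non-negativity of edge weights, so that all fractions are genuine non-cancelling quantities in $[0,1]$), and that the intersection event ``intercepted by all of $v_i,v_j,\dots$'' is exactly the quantity named $f$. Once these interpretive points are settled the algebra is immediate, and the first-order coefficient collapses to $\eta\sum_i eig(v_i)^2$ by the Remark, completing the identification.
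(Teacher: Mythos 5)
Your proof is correct and follows essentially the same route as the paper: the Corollary is an immediate application of the inclusion--exclusion principle to the flow-interception events $A_i$, with the first-order terms identified as $c(\{v_i\})=\eta\, eig(v_i)^2$ via the Remark and the higher-order joint-interception fractions being, by definition, the quantities $f$. Your insistence on working at the level of flow events rather than vertex-set intersections is exactly the right reading, since distinct singletons have empty vertex-set intersection, and it is precisely this event-level interpretation (flows intercepted by \emph{all} of $v_i,v_j,\dots$) that the paper's $f$ encodes.
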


%As a corollary, observe that for two cycles $\gamma_1$ and $\gamma_2$ the centrality of their symmetric difference $\gamma_1\ominus\gamma_2$, which consists of the edges traversed by either $\gamma_1$ or $\gamma_2$ but not both, fulfils $c(\gamma_1\ominus\gamma_2)\leq c(\gamma_1)+c(\gamma_2)$.
%The proposition suggests that the centrality we propose here is also tied to the flow betweenness centrality proposed by Freeman \textit{et al.} \cite{Freeman1991}. The flow betweenness centrality of a node $i$ was defined to be the sum of the weights of all the simple paths passing through this vertex. The centrality $c(.)$ provides the hitherto unnoticed link between Freeman's work and the eigenvector centrality:
%\begin{proposition}
%
%
%\end{proposition}

\subsection{Wolfe's dataset}
\begin{table}%[tbhp]
\centering
\textbf{Centralities of groups of monkeys in Wolfe's dataset\\}
\vspace{1mm}
\begin{tabular}{cccccc}
\textbf{Group} & \textbf{Members} & $\mathbf{c(H)}$\textbf{ in $\%$} & \thead{Degree\\group centrality}& \thead{Average closeness \\group centrality}&\thead{Group \\betweenness} \\
\hline
Age 10$-$13 & 2~3~8~12~16 & 67$\%$ &   11&15&43.5\\
Age 7$-$9 & 4~5~9~10~15~17 & 57$\%$ & 5 &13.7&0 \\
Age 14$-$16 & 1~6~11~13~19& 49$\%$ & 8 &18&2.84 \\
Age 4$-$6 &7~14~18~20&34$\%$ & 5 &20.5&0 \\[.5em]
Females &$6-20$&95$\%$ & 4&6.4&0.5  \\
Males &$1-5$&67$\%$ & 10&16&24.34 \\
\hline
\end{tabular}
\caption{\label{Monkeys} Comparison between several of Everett and Borgatti's group centralities \cite{Everett1999} and the centrality $c(H)$. 
 The centrality values for $c(H)$ are given here in $\%$ as they give the proportions of all successions of interactions between monkeys involving at least one member of the group. The centralities $c(H)$ was computed by the FlowFraction algorithm available on the Matlab File Exchange \cite{MatlabFiles}}
\vspace{-5mm}
\end{table}

We begin our concrete comparison with group-centralities on the Wolfe primate dataset \cite{UCINET}, a small real-world network which was studied by Everett and Borgatti. This dataset provides the number of times  monkeys of a group of 20 have been spotted together next to a river by the anthropologist Linda Wolfe. 

Our results are shown in Table.~\ref{Monkeys}. Here the properties that $c$ is always between 0 and 1 and that its values have actual meaning are clearly advantageous.
For example, we can now not only tell that the age group 10$-$13 is the most central, as  Everett and Borgatti noted, but we can concretely assert that $67\%$ of all flows of interactions between monkeys involved at least one member of this group. By flow (or chain) of interactions, we mean successions of interactions between monkeys, including interactions that may occur simultaneously. For example, we can have monkey 1 interact with 3, who then interacts with 8; while concurrently 2 meets 4 etc.\\[-.8em] 

Similarly, we note that almost $95\%$ of all flows of interactions involved at least one female, while this percentage dropped to $64\%$ for males, in spite of male 3 being the most central individual monkey in the entire group by all measures. Thus, according to $c(H)$ and contrary to all the group centralities reported here,\footnote{Everett and Borgatti also discuss normalisations of the group-centralities. In the case of the degree group-centrality, the normalisation is defined to be the degree group centrality divided by the number of nodes which do not belong to the group under consideration. Normalisations tends to rank females ahead of males as $c(H)$ does, but they represent non-linear transformation of the original group-centralities, making their interpretation more difficult.} females are quantitatively more important in mediating social interactions than the males. Here, it may help to know that the monkeys observed by Wolfe were feral Rhesus macaques (\textit{Macaca mulatta}), a species where females stay in the group of their birth, providing its dominance rank structure, while males must change group when reaching sexual maturity, around 4 years old. Furthermore, during the mating season, females favour multiples interactions with different males including low ranking ones \cite{Lindburg1971}. Finally, females typically outnumber males, sometimes by as much as 3 to 1. These observations suggest that females should indeed account for a larger share of the all interactions between monkeys than the males.\\[-.8em] 

Another point of importance for the comparison is the age group 7$-$9, which is ranked higher than the age group 14$-$16 by $c(H)$ while the group-centralities consistently yield the opposite order. On this point, we observe that Rhesus macaques are peculiar in that younger females have higher social ranks than their older peers \cite{Hill1996,Waal1993}. In the closely related Japanese macaques (\textit{Macaca fuscata}), dominance rank is known to be positively correlated with the frequency of social interactions \cite{Singh1992}.

\subsection{Yeast PPI network and protein complexes}
In this section we study the PPI network of the yeast \textit{Saccharomyces cerevisiae}, using high quality data from \cite{Hart2007}, which provides a network comprising 5303 interactions between 1689 individual proteins. These proteins are known to belong to complexes, a curated list of which is provided by the Munich Information center on Protein Sequences (MIPS) \cite{Guldener2006}. The authors of \cite{Hart2007} have shown that some of the MIPS complexes could be recovered from a run of the MCL clustering algorithm running on the network. Our goal here is twofold: i) to show that the centrality $c(.)$ can also be used to recover MIPS protein complexes, for which it provides additional informations; and ii) that the degree group centrality fails to do so. Here, we focus specifically on the degree group centrality as the degree centrality is the vertex measure of importance which has seen the most success in biology, see e.g. \cite{Mukhtar2011}.\\[-.5em]

\begin{figure}[t!]
\vspace{-2mm}
\centering
\includegraphics[width=.8\linewidth]{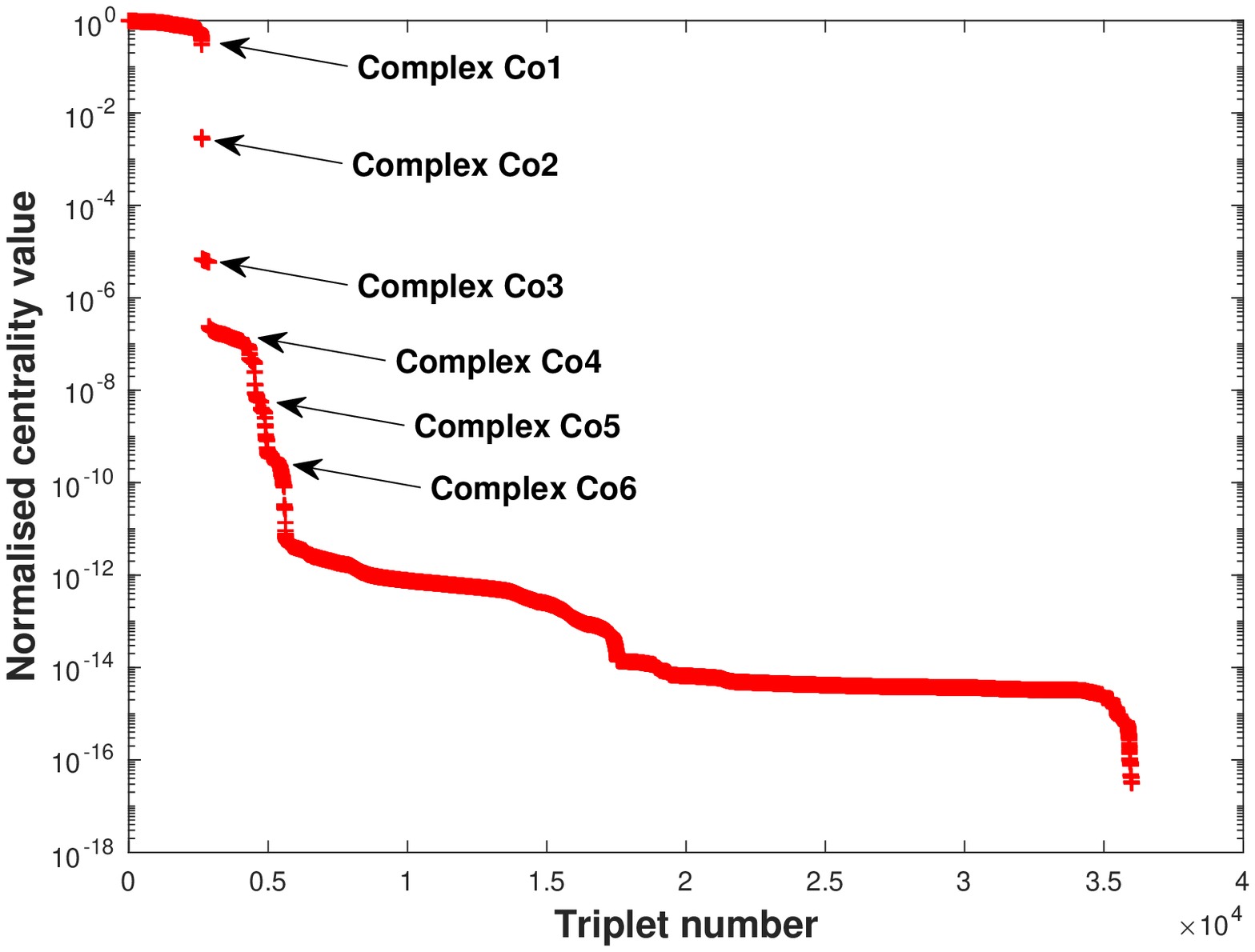}\\
\includegraphics[width=.8\linewidth]{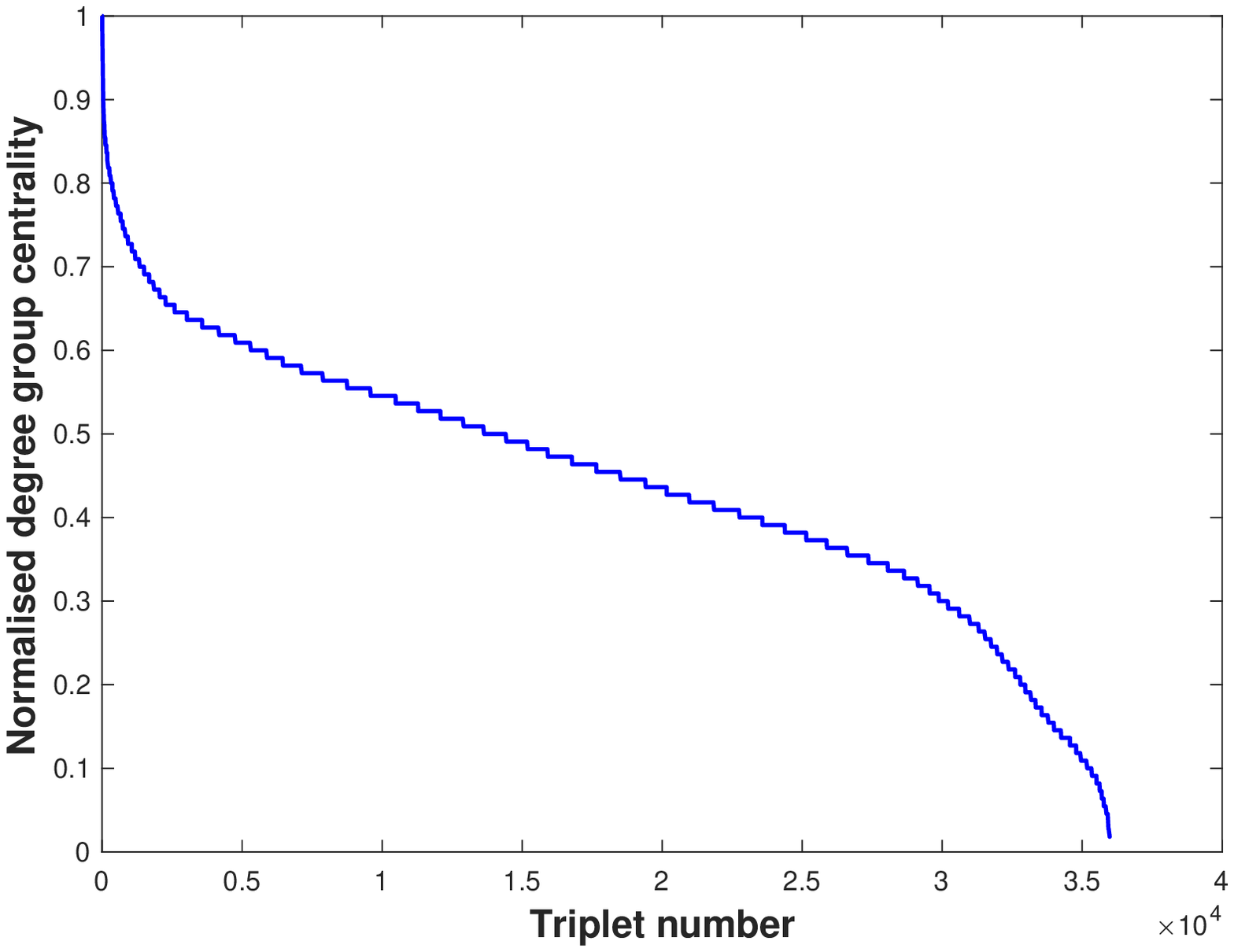}
\caption{\label{triplets} Distributions of triplet centralities. Top: normalised triplet centralities $c(t)/\max_{t\text{ triplet}}\big(c(t)\big)$, bottom: normalised degree group centrality $g(t)/\max_{t\text{ triplet}}\big(g(t)\big)$ introduced in \cite{Everett1999}.}
\end{figure} 
We calculated the centralities $c(.)$ and degree group centralities of all edges, connected triplets and connected quadruplets of proteins on the network. What is interesting here is the distribution of centrality values, which we show in Fig.~(\ref{triplets}) in the case of triplets.\footnote{Edges and quadruplets give broadly similar distributions. While complexes Co1, Co2 and Co3 are just as markedly visible in quadruplet data as in triplet data, quadruplets do lead to better segregation of complexes Co4, Co5 and Co6.} In the case of the centrality $c(.)$ proposed here, the distribution of triplet centrality values is organised into separate plateau-like clusters, which actually reveal the underlying protein complexes. Recovering the list of proteins involved in these clusters yields complexes which can be found in curated databases \cite{Pu2009}. Mathematically, the fact that complexes lead to clustered plateau-like centrality values for triplets means that the frequency with which proteins belonging to these complexes are involved in successions of proteins reactions depends first and foremost on the complexes themselves. In other terms, the frequency of protein activation is determined at the complex level.\\[-.5em] 

The dominant complex, here denoted Co1, comprises 30 proteins\footnote{It comprises proteins ASF1, EHD3, FYV4, MAM33, MRP1, MRP4, MRP10, MRP13, MRP21, MRP51, MRPS5, MRPS8, MRPS9, MRPS16, MRPS17, MRPS18, MRPS28, MRPS35, NAM9, PET123, RSM7, RSM10, RSM18, RSM19, RSM22, RSM23, RSM24, RSM25, RSM26 and RSM27.} and is found in both the MIPS database and in \cite{Pu2009}, where it is known as the mitochondrial small ribosomal large subunit. Interestingly, Co1 is identical with the third largest complex recovered by the MCL algorithm running on the same dataset \cite{Hart2007}, with the addition of the proteins ASF1 and MAM33, a nucleosome assembly factor and a protein of the mitochondrial matrix involved in oxidative phosphorylation, respectively. In the latter case, we note that several complexes involving the MAM33 and proteins of mitochondrial small ribosomal large subunit have been proposed in experimental studies \cite{MAM33}.
Complex Co2 comprises 21 proteins.\footnote{These are ASF1, CDC48, CKA1, HAT1, HAT2, HHF1, HHF2, HHT2, HIF1, HIR2, PDS5, POB3, PSE1, PSH1, RAD53, RTG2, RTT106, SPT16, YDL156W, YIL070C and YKU70.} It includes the entire complex C17 determined by the MCL method \cite{Hart2007}, together with 6 additional proteins all which are been proposed to form complexes (in particular the HIR and Rad53p-Asf1p complexes) with one or more proteins of C17 in separate studies \cite{Pu2009} as well as in the MIPS database. 
Complex Co3 comprises 64 proteins and overlaps significantly with the nucleosomal protein and CID 14 and complexes of \cite{Pu2009}, the latter of which includes the Casein kinase II, RNA polymerase II and Cdc73/Paf1 complexes.\footnote{This complex is ASF1, CDC34, CDC48, CDC53, CDC73, CDC9, CHD1, CKA1, CKA2, CKB1, CKB2, CTR9, DOA1, GRR1, HAT1, HAT2, HHF1, HHF2, HHT2, HIF1, HIR1, HIR2, HOT1, HPC2, HTA1, KAP114, LEO1, MET30, MKT1, MRF1, NAP1, NPL4, ORC2, ORC3, ORC4, ORC5, PAF1, PDS5, PEX19, POB3, POL12, PSE1, PSH1, RAD27, RAD53, RPS1B, RRP7, RTF1, RTG2, RTT101, RTT106, SHP1, SKP1, SPO12, SPT16, TOP1, UFD1, ULP1, UTP22, YDL156W, YDR049W, YGR017W, YKU70 and YKU80}\\[-.5em]

An advantage of the classification method employed here is that, contrary to MCL, it allows for overlapping complexes, i.e. proteins which functions in different complexes, as is expected biologically. At the same time, a drawback is that small centrality values are not segregated well enough to clearly distinguish clusters of values and hence complex boundaries. At least three more complexes Co4, Co5 and Co6 could possibly be distinguished, all of which can be found in MIPS database, however these are less clear cut than the first three complexes and so are left out from this work.
Empirically, we found that this problem could be somewhat reduced by looking at quadruplets, quintuplets etc., but this comes at a great computational cost given the number of such objects. A random sampling scheme may be able to bypass this difficulty.\\[-.5em]  

In comparison, the distribution of degree group centrality shows no trace of the underlying protein complexes and reveals little more than the simple distribution of vertex degrees. 
While we do not recommend the use of the centrality $c(.)$ as a clustering tool owing to its greater computational cost than algorithms such as MCL, we believe that its performance in this domain bears witness to the sensitivity of the proposed centrality to underlying network features. Conversely, the notion of group-centrality may be too coarse to perceived such features in the data, at least in the case of PPI.

\section{Conclusion}
In this second work on the centrality $c(.)$, we have rigorously established its meaning as a fraction of network flows intercepted by any chosen ensembles of nodes. The centrality $c(.)$ not only induces the eigenvector centrality on vertices, but it is a proper extension of it through an application of the inclusion-exclusion principle on network flows. Finally, we have shown on two real-world networks that the centrality $c(.)$ is more sensitive to critical network features than existing group-centralities. In particular, the centrality of triplets of proteins in the PPI network of the yeast was sufficient to distinguish protein complexes found in curated databases of experimental results. We recall that in our previous study \cite{Giscard2017}, the centrality $c(.)$ already produced the best available model for pathogen targeting in \textit{Arabidopsis thaliana}, yielding a $25\%$ improvement of the state-of-the-art model of \cite{Mukhtar2011}. We hope that these results will spur further research on the use of the centrality in biology.

\section*{Declarations}
\footnotesize{
\noindent \textbf{Availability of data and material} Raw data concerning Wolfe's dataset and the PPI of the yeast can be found in \cite{UCINET} and \cite{Hart2007}, respectively. The algorithms used to compute the centrality values are available online, on the Matlab File Exchange \cite{MatlabFiles}.\\[-.5em]

\noindent \textbf{Authors' contributions} P.-L. Giscard performed the research and both P.-L. Giscard and R. C. Wilson wrote the article.\\[-.5em]

\noindent \textbf{Competing interests} P.-L. Giscard and R. C. Wilson declare no financial and non-financial competing interests.\\[-.5em]

\noindent \textbf{Funding} P.-L. Giscard is grateful for the financial support from the Royal Commission for the Exhibition of 1851. The Royal Commission played no role in the present study and had no influence on the analysis of the data.}\\[-.5em]

\noindent \textbf{Acknowledgement} We thank Paul Rochet of the Laboratoire Jean-Leray, Nantes, France, for stimulating discussions.

\end{document}